\theoremstyle{plain}
\newtheorem{theorem}{Theorem}
\newtheorem{definition}[theorem]{Definition}
\newtheorem{lemma}[theorem]{Lemma}
\title{Neighborhood Mutual Remainder: Self-Stabilizing Implementation of Look-Compute-Move Robots\protect\linebreak \rm{(Extended Abstract)}}
\author{
Shlomi Dolev\thanks{Department of Computer Science, Ben-Gurion University of the Negev, Israel}
\and
Sayaka Kamei\thanks{Department of Information Engineering, Graduate School of Engineering, Hiroshima University, Japan}
\and
Yoshiaki Katayama\thanks{Department of Computer Science and Engineering, Graduate School of Engineering, Nagoya Institute of Technology, Japan}
\and
Fukuhito Ooshita\thanks{Graduate School of Science and Technology, Nara Institute of Science and Technology, Japan}
\and
Koichi Wada\thanks{Department of Applied Informatics, Faculty of Science and Engineering, Hosei University, Japan}~~\thanks{This author was supported in part by JSPS KAKENHI No. 17K00019 in this research.}
 }
\date{}
\begin{document}

\maketitle

\begin{abstract}
Local mutual exclusion guarantees that no two neighboring processes enter a critical section at the same time while satisfying both \emph{mutual exclusion} and \emph{no starvation} properties. On the other hand, processes may want to execute some operation 
simultaneously with the neighbors. Of course, we can use a globally synchronized clock to achieve the task but it is very expensive to realize it in a distributed system in general. 

In this paper, we define a new concept \emph{neighborhood mutual remainder}. A distributed algorithm that satisfies the neighborhood mutual remainder requirement should satisfy \emph{global fairness}, \emph{$l$-exclusion} and \emph{repeated local rendezvous} requirements. Global fairness is satisfied when each process
(that requests to enter the critical section infinitely often) executes the critical section infinitely often, $l$-exclusion is satisfied when at most $l$ neighboring processes enter the critical section at the same time, and repeated local rendezvous is satisfied when for each process
infinitely often no process in the closed neighborhood is in the critical or trying section. 

We first formalize the concept of neighborhood mutual remainder, and give a simple self-stabilizing algorithm to demonstrate the design paradigm to achieve neighborhood mutual remainder.
We also present two applications of neighborhood mutual remainder to a \emph{Look-Compute-Move} robot system. One is for implementing a move-atomic property and the other is for implementing FSYNC scheduler, where robots possess an independent clock that is advanced in the same speed. These are the first self-stabilizing implementations of the LCM synchronization.
\end{abstract}

\noindent
\textbf{Keywords:} neighborhood mutual remainder, self-stabilization, Look-Compute-Move robot

\section{Introduction}

Distributed systems sometimes encounter \emph{mutually exclusive} operations such that, while one operation is executed by a participant, another operation cannot be executed by the participant and its neighboring participants. 
For example, consider a database shared by multiple processes. An administrator may backup the database (i.e., execute a backup operation) while no process accesses the database (i.e., executes an access operation). In this case, the backup operation and the access operation are mutually exclusive.

As another example, we can consider a LOOK-COMPUTE-MOVE (LCM) robot system~\cite{FPS}, where each robot repeats executing cycles of LOOK, COMPUTE, and MOVE phases. Some algorithms in the LCM robot system assume the move-atomic property, that is, while robot $r$ executes LOOK and COMPUTE phases, $r$'s neighbors (\emph{i.e.}, robots in $r$'s sight) cannot execute a MOVE phase. In this case, the MOVE operation and the LOOK/COMPUTE operations are mutually exclusive.

To execute mutually exclusive operations in a consistent manner, participants should schedule the operations carefully. One may think we can apply \emph{mutual exclusion} \cite{L96,AW04} or \emph{local mutual exclusion} \cite{AS98,KY02} to solve the local synchronization problem. Mutual exclusion (resp., local mutual exclusion) guarantees that no two participants (resp., no two neighboring participants) enter a critical section at the same time. Indeed, if participants execute mutually exclusive operations only when they are in the critical section, they can keep the consistency because no two neighboring participants execute the mutually exclusive operations at the same time. On the other hand, this approach seems expensive because participants execute the operations sequentially despite that they are allowed to execute the same operation at the same time. In addition, to realize local mutual exclusion, participants should achieve symmetry breaking because one participant should be selected to enter the critical section. However, in highly-symmetric distributed systems such as the LCM robot system, it is difficult or even impossible to achieve deterministic symmetry breaking and thus achieve local mutual exclusion. 

From this motivation, we define the \emph{neighborhood mutual remainder} distributed task over a distributed system with a general, non-necessarily complete, communication graph. A distributed algorithm that satisfies the neighborhood mutual remainder requirement 
should satisfy \emph{global fairness}, \emph{$l$-exclusion}, and \emph{repeated local rendezvous} (or as we use in the sequel for the sake of readability, simply, local rendezvous) requirements. Global fairness is satisfied when each participant executes a critical section 
infinitely often, $l$-exclusion is satisfied when at most $l$ neighboring processes enter the critical section at the same time, and local rendezvous is satisfied when for each participant infinitely often no participant in the closed neighborhood is in the critical section.

Unlike the classical (local) mutual exclusion problem \cite{L96,AW04}
the neighborhood mutual remainder allows (up to $l$, in the sequel we use the number of neighbors plus 1 to be $l$) neighboring participants to be simultaneously in the critical section, but requires a guarantee for neighborhood rendezvous \cite{FSVY16} in the remainder, namely, the state which is not in the critical or trying sections.

As an application example, consider a LOOK-COMPUTE-MOVE (LCM) robot system again. The aforementioned move-atomic property can be achieved by neighborhood mutual remainder: Each robot executes a MOVE phase only when no robot in its closed neighborhood is in the critical section, and executes LOOK and COMPUTE phases only when it is in the critical section. Clearly, while some robot executes LOOK and COMPUTE phases, none of its neighbors executes a MOVE phase. From the global fairness and local rendezvous properties, all robots execute LOOK, COMPUTE, and MOVE phases infinitely often.

One may depict the \emph{neighborhood mutual remainder} distributed task in terms of dynamic graph coloring where each participant should be \emph{red} infinitely often, and at the same time each neighborhood should have infinite instances in which no participant in the closed neighborhood is \emph{red}.
  
One obvious solution to the problem is implied by a distributed synchronizer, as described in \cite{L96,AW04} see \cite{D2K} for \emph{self-stabilizing} synchronizers. The synchronizer may color the system with several colors one of which is \emph{red}, and in this way \emph{all} neighborhood will be non \emph{red} at the same time while each (in fact all simultaneously) will be \emph{red} infinitely often.
One of our solutions is based on such approach in semi-synchronous settings, however as is always the case, global synchronization implies the need to wait for the slowest participant/neighborhood and is less robust to temporal local faults, for example a single participant that stops its operation, say while being \emph{red} can stop the progress of the entire system. 
Hence, local solutions implied by the only neighborhood restriction may be preferred. 

\medskip

\noindent
{\bf Our Contributions.} We first formalize the concept of neighborhood mutual remainder, and give a design paradigm to achieve neighborhood mutual remainder. To demonstrate the design paradigm, we consider synchronous distributed systems and give a simple self-stabilizing algorithm for neighborhood mutual remainder. To simplify the discussion, we assume $l=\Delta+1$, where $\Delta$ is the maximum degree, that is, $l$-exclusion is always satisfied.

After that, to demonstrate applicability of neighborhood mutual remainder, we implement a self-stabilizing synchronization algorithm for an LCM robot system by using the aforementioned design paradigm. As described above, in the LCM robot system, each robot repeats executing cycles of LOOK, COMPUTE, and MOVE phases. First, we realize the move-atomic property in a self-stabilizing manner on the assumption that robots repeatedly receive clock pulses at the same time, where the move-atomic property guarantees that, while some robot executes LOOK and COMPUTE phases, no robot in its sight executes a MOVE phase. After that, we extend the self-stabilizing algorithm to the assumption that robots receive clock pulses at different times but the duration between two pulses is identical for all robots. Lastly, on the same assumption, we implement the FSYNC model in a self-stabilizing manner, that is, based on such an individual clock pulse assumption, we make all robots simultaneously execute LOOK, COMPUTE, and MOVE phases. This research presents the first self-stabilizing implementation of the LCM synchronization, allowing the implementation in practice of any self-stabilizing or stateless robot algorithm, where robots possess independent clocks that are advanced in the same speed.

\medskip

\noindent
{\bf Related works.} For global mutual exclusion problem,
much research has been devoted 
to self-stabilizing algorithms, e.g., \cite{israeli90b} and \cite{KY02J}.
Self-stabilizing distributed algorithms for 
the local mutual exclusion problem are proposed 
in \cite{Beauquier2000} and \cite{KY02}.
Various generalized versions of mutual exclusion have been studied extensively, e.g., $l$-mutual exclusion \cite{ADHK01}\cite{flatebo1995}, mutual inclusion \cite{MUTIN}, $l$-mutual inclusion \cite{MUTIN}, critical section problem \cite{PDAA2018}.

Robots with globally observed light were introduced in~\cite{DFP+16} and used to synchronize the LCM schedules among the robots. In~\cite{DFP+16},  the authors show that asynchronous robots with lights can simulate any algorithm on semi-synchronous robots without robots and thus the asynchronous robots with lights has the same power as the semi-synchronous with lights.
However, unlike our setting, this simulation algorithm is performed asynchronously on the same LCM robot system as the simulated semi-synchronous algorithm works. On the other hand, in our setting, as an application of newly introduced neighborhood mutual remainder, robots utilizing lights and global pulse can implement some LCM schedules such as asynchronous move-atomic and fully-synchronous ones in self-stabilizing manners. Also although in~\cite{DFP+16} unlimited visibility is assumed, in our setting, 
limited visibility is assumed and only neighboring robots observe the light.

\medskip

\noindent
{\bf Roadmap.}
This paper is organized as follows:
Section 2 defines the concept of neighborhood mutual remainder and demonstrates the design paradigm.
Section 3 gives several definitions for robot systems.
Sections 4 and 5 present self-stabilizing move-atomic algorithms with and without global pulses, respectively.
Section 6 presents a self-stabilizing implementation of FSYNC model.
Some details and proofs are omitted from this extended abstract.
Section 7 concludes this paper.

\section{Neighborhood Mutual Remainder}
\label{sec:nmr}

In this section, we introduce a concept of neighborhood mutual remainder and give a design paradigm to achieve neighborhood mutual remainder. To explain the design paradigm in a simple way, we consider fully-synchronous distributed systems and present a self-stabilizing algorithm as an example.

\subsection{A system model}

A distributed system is represented by an undirected connected graph $G=(V,E)$, where $V=\{v_0,\ldots,v_{k-1}\}$ is a set of processes and $E$ is a set of communication links between processes. Processes are anonymous and identical, that is, they have no unique identifiers and execute the same deterministic algorithm. Process $v_i$ is a neighbor of $v_j$ if $(v_i,v_j)\in E$ holds. A neighborhood of $v_i$ is denoted by $N(i)=\{v_j \mid (v_i,v_j) \in E\}$, and the degree of $v_i$ is denoted by $\delta(i)=|N(i)|$. Let $\Delta=\max\{\delta(i) \mid v_i \in V\}$. A closed neighborhood of $v_i$ is denoted by $N[i]=N(i) \cup \{v_i\}$. 

Each process is a state machine that changes its state by actions. We consider the \emph{state-reading model} as a communication model. In this model, each process $v_i$ can directly read a state of $v_j \in N[i]$ and update its own state.

Processes operate synchronously based on global pulses. That is, all processes regularly receive the pulse at the same time, and operate when they receive the pulse. The duration of local computation (including updates of its state) is sufficiently small so that every process completes the local computation before the next pulse.

\subsection{Concept of neighborhood mutual remainder}

In this subsection, we introduce a concept of neighborhood mutual remainder. Analogous to the mutual exclusion task, processes have a critical section in their program, a section that they should enter on a will and must exit thereafter to the remainder section. Unlike mutual exclusion, we require that \emph{all} processes in a closed neighborhood may be infinitely often simultaneously in the remainder section for a while, while having the opportunity to execute the critical section, possibly simultaneously with others, infinitely often too, exiting the critical section following each such entry.

\begin{definition}
\textbf{(Neighborhood mutual remainder)} The system achieves neighborhood mutual remainder if the following three properties hold.
\begin{itemize}
\item Global fairness: Every process infinitely often enters the critical section\footnote{Alternatively global non starvation, where every process willing to enter the critical section infinitely often enters the critical section infinitely often.}.
\item $l$-exclusion: For every process $v_i$, at most $l$ processes in $N[i]$ enter the critical section at the same time.
\item Local rendezvous: For every process $v_i$, infinitely many instants exist such that no process in $N[i]$ is in the critical section or trying section  (i.e., every process in $N[i]$ is in the remainder section)\footnote{In fact, one can define $m$-rendezvous, where $m$ is the subset of neighbors that should be simultaneously in the reminder, defining $(l,m)$-neighboring mutual remainder. In our case, $m$ is the number of neighbors plus 1.}.
\end{itemize}
\end{definition}

\subsection{A self-stabilizing algorithm for neighborhood mutual remainder}

In this subsection, we give a design paradigm to achieve neighborhood mutual remainder. As an example, we realize a self-stabilizing algorithm to achieve neighborhood mutual remainder. To simplify the discussion, we assume $l=\Delta+1$, that is, $l$-exclusion is always satisfied.

\begin{definition}
\textbf{(Self-stabilization)} The system is self-stabilizing if both of the following properties hold.
\begin{itemize}
\item Convergence: The system eventually reaches a desired behavior from any initial configuration, where a configuration is a collection of states of all processes in the system.
\item Closure: Once the system reaches a desired behavior, it keeps the desired behavior after that.
\end{itemize}
\end{definition}

First we give the underlying idea of the self-stabilizing algorithm. Let us consider a simple setting where $|N[i]|$ is identical for any $v_i$. Every process $v_i$ maintains a clock $\mathit{Clock_i}$ that is incremented by 1 modulo $(|N[i]|+1)$ in every pulse. The value of $\mathit{Clock_i}$ may differ from the value of $\mathit{Clock_j}$, for a neighbor $v_j$ of $v_i$. Say, for the sake of simplicity, that $v_i$ may possess the critical section only when $\mathit{Clock_i} = 1$. Thus, ensuring that there is a configuration in which all processes in the remainder is equivalent to ensuring that there is a configuration in which the values of all the above clocks are not equal to 1. Using the pigeon holes principle in every $|N[i]|+1$ consequence pulse clocks, there must be a configuration in which no clock value of a neighboring processes is 1 and at the same time $\mathit{Clock_i}$ is not 1 too. Hence, the neighborhood mutual remainder must hold. 

Since $|N[i]| \neq |N[j]|$ may hold for some $v_i$ and $v_j$, we use $\mathit{MaxN_i}=\max\{|N[j]| \mid v_j \in N[i]\}$ instead of $|N[i]|$. Since every process $v_j\in N[i]$ enters a critical section at most once in $\mathit{MaxN_i}+1$ consecutive pulses, we can still use the pigeon holes principle and hence the neighborhood mutual remainder must hold.

Algorithm~\ref{ssnmr} gives a self-stabilizing algorithm to achieve neighborhood mutual remainder. 
In addition to $\mathit{Clock_i}$, each process $v_i$ has two variables $N_i$ and $\mathit{MaxN_i}$. Process $v_i$ broadcasts $|N[i]|$ to its neighbors by using variable $N_i$. After that $v_i$ computes $\mathit{MaxN_i}$ from $N_j$ ($v_j\in N[i]$) and stores it to $\mathit{MaxN_i}$. Process $v_i$ increments $\mathit{Clock_i}$ modulo $(\mathit{MaxN_i}+1)$, and enters the critical section if $\mathit{Clock_i}=1$.
Process $v_i$ also exposes the value of $\mathit{Clock_i}$ to its neighbors. Thus, a process can rendezvous when all the neighborhood clocks are not equal to 1.

\newcount\lno
\def\XLN{\lno=0}
\def\LN{%
\global\advance\lno by 1%
\hbox to1.2em{\textit{\scriptsize\number\lno:\hfil}}}
\begin{algorithm}[h] 
\begin{tabbing}
10: \= sp \= sp \= sp \= sp \= \kill
\XLN
\LN \> \textbf{Upon} a global pulse\\
\LN \> \> $N_i := |N[i]|$ \\
\LN \> \> $\mathit{MaxN}_i := \max\{ N_j \mid v_j \in N[i]\}$\\
\LN \> \> $\mathit{Clock}_i := (\mathit{Clock}_i+1) \bmod (\mathit{MaxN}_i+1)$ \\
\LN \> \> \textbf{if} $\mathit{Clock_i} = 1$ \textbf{then}\\
\LN \> \> \> Enter the critical section and leave before the next pulse\\
\LN \> \> \textbf{else}\\
\LN \> \> \> // Stay in the remainder section\\
\LN \> \> \> Rendezvous when all neighboring clocks $\neq 1$
\end{tabbing}
\vspace{-0.3cm}
\caption{Self-Stabilizing Neighborhood Mutual Remainder Algorithm for $l=\Delta+1$. Pseudo-Code for $v_i$.}\label{ssnmr}
\end{algorithm}

\begin{theorem}
Algorithm~\ref{ssnmr} achieves neighborhood mutual remainder with $l=\Delta+1$ in a self-stabilizing manner.
\end{theorem}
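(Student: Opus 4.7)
\medskip
\noindent
\textbf{Proof plan.}
My plan is to establish a short convergence phase and then verify the three neighborhood mutual remainder requirements on a stabilized configuration. I will handle convergence, $l$-exclusion, global fairness, and local rendezvous in that order, since each later step depends on the clock update operating with the correct modulus.

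First, I will argue convergence in at most two pulses. Line~2 unconditionally sets $N_i := |N[i]|$, so after the first pulse every $N_i$ is correct regardless of the initial configuration. Reading these now-correct $N_j$ values, line~3 then produces $\mathit{MaxN}_i = \max\{|N[j]| : v_j \in N[i]\}$ at every process within at most one further pulse (depending on whether reads in the state-reading model see start-of-pulse values or within-pulse updates, one or two pulses suffice). Because lines~2--3 derive $N_i$ and $\mathit{MaxN}_i$ from $G$ alone at every pulse, these variables remain correct forever afterwards; closure of the stabilized configurations is immediate.

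Next, $l$-exclusion with $l=\Delta+1$ is trivial: $|N[i]|\le\Delta+1$ already bounds the number of processes in $N[i]$, so it bounds those currently in the critical section. For global fairness, once $\mathit{MaxN}_i$ is correct, line~4 makes $\mathit{Clock}_i$ cycle deterministically through $\{0,1,\ldots,\mathit{MaxN}_i\}$, hitting the value $1$ exactly once in every $\mathit{MaxN}_i+1$ consecutive pulses; hence $v_i$ enters the critical section infinitely often.

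The main step is local rendezvous, and this is where I expect the argument to need the most care. Fix $v_i$ and take any window $W$ of $|N[i]|+1$ consecutive pulses after convergence. The crucial observation is that for every $v_j\in N[i]$ one has $v_i\in N[j]$, so the definition of $\mathit{MaxN}_j$ as a maximum over the \emph{closed} neighborhood $N[j]$ forces $\mathit{MaxN}_j \ge |N[i]|$. Therefore each $v_j\in N[i]$ has $\mathit{Clock}_j=1$ at most once inside $W$. Summing over the $|N[i]|$ processes in $N[i]$, at most $|N[i]|$ of the $|N[i]|+1$ pulses in $W$ can have some $v_j\in N[i]$ critical, so at least one pulse in $W$ witnesses all clocks in $N[i]$ different from $1$, i.e., a local rendezvous. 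Such windows recur forever, giving the required infinitely-often condition. The main obstacle is precisely this inequality $\mathit{MaxN}_j\ge |N[i]|$: were $\mathit{MaxN}_j$ defined over the open neighborhood $N(j)$, or were the window shortened by one, the pigeonhole count would fail, so the proof must carefully tie the choice of closed neighborhoods to the window length.
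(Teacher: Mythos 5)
Your proposal is correct and follows essentially the same route as the paper's own proof: convergence of $N_i$ and $\mathit{MaxN}_i$ within two pulses, trivial $l$-exclusion, periodicity of the clock for global fairness, and the pigeonhole argument over a window of $|N[i]|+1$ pulses using $\mathit{MaxN}_j \ge |N[i]|$ for local rendezvous. You simply spell out the key inequality and the window-length bookkeeping more explicitly than the paper does.
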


\begin{proof}
Every process $v_i$ correctly assigns $|N[i]|$ to $N_i$ at the first pulse, and hence it correctly assigns $\max\{|N[j]| \mid v_j\in N[i]\}$ to $\mathit{MaxN_i}$ at the second pulse. After the second pulse, variable $\mathit{MaxN_i}$ is never changed for any $v_i$.

After the second pulse, $v_i$ enters a critical section once in $\mathit{MaxN_i}+1$ consecutive pulses. Hence, global fairness property is satisfied. For any $v_j\in N[i]$, since $\mathit{MaxN_j} \ge |N[i]|$ holds, $v_j$ enters a critical section at most once in $|N[i]|+1$ consecutive pulses. Hence, during $|N[i]|+1$ consecutive pulses, there is a configuration such that no process $v_j\in N[i]$ enters a critical section from the pigeon holes principle. Hence, local rendezvous property is satisfied. Since $l$-exclusion is always satisfied in case of $l=\Delta+1$, the theorem holds.
\end{proof}

\section{Preliminaries for Robot Systems}

In the following sections, we show the effectiveness of neighborhood mutual remainder by applying it to an implementation of LCM synchronization in mobile robot systems.

\subsection{Underlying robot model}

In the robot system, $k$ mobile robots exist in a plane. Robots do not know the value of $k$. Robots are anonymous and identical, that is, they have no unique identifiers and execute the same deterministic algorithm. Each robot has a memory. Each robot has a light, which can emit a color to other robots. Each robot $r$ can read information (i.e., positions and colors) of robots within a fixed distance from its current position. Robots have no direct communication means except for lights. A communication graph is defined as $G=(V,E)$ where $V$ is a set of robots and $E$ is a set of robot pairs that can read each other. Note that the communication graph may change when robots move. We say robot $r_i$ is a neighbor of $r_j$ if $(r_i,r_j)\in E$ holds. A neighborhood of robot $r_i$ is denoted by $N(i)=\{r_j \mid (r_i,r_j) \in E\}$, and a closed neighborhood of $r_i$ is denoted by $N[i]=N(i) \cup \{r_i\}$. 

\medskip

\noindent{\bf Dynamic graph reduction.}
Let $\phi$ be the distance a robot views, namely, the local neighborhood remainder algorithm of a robot is executed with all robots within $\phi$ distance from the robot. We assume that each robot moves up to $y<\phi$ in a single time unit and uses neighbors up to $\phi-y$ when executing \textsc{LOOK} and \textsc{COMPUTE}. Since $r_i$ does not execute \textsc{MOVE} when it executes \textsc{LOOK}, the only neighborhood dynamism is from another robot $r_j$ that is not viewed by $r_i$ in the neighborhood remainder algorithm, therefore is not included in the local synchronization, but penetrates to be in the \textsc{COMPUTE} zone of $r_i$ while $r_i$ executes \textsc{LOOK}. Hence having a $y$-tier eliminates such a scenario by any $r_j$.

\medskip

\noindent
{\bf Clock pulses.}
Robots operate based on pulses, which are generated in a partially-synchronous manner. When a robot receives a pulse, it instantaneously takes a snapshot by reading positions and colors of neighboring robots, and then computes and moves based on the snapshot before the next pulse. We consider two different pulses depending on the partial-synchronization assumptions. 
\begin{itemize}
\item \emph{Global pulses} are external pulses.
All robots receive these pulses at the same time, and the duration between two successive pulses is identical for all robots.
\item \emph{Local pulses} are generated locally.
All robots receive these pulses at different times, but the duration between two successive pulses is identical for all robots.
\end{itemize}
We regard the duration between two successive pulses as one time unit.

\subsection{LCM synchronization}

In this subsection, we describe traditional LCM (Look-Compute-Move) synchronization models we will implement on the underlying robot model. In the LCM model, each robot repeats three-phase cycles: \textsc{LOOK}, \textsc{COMPUTE}, and \textsc{MOVE}. During the \textsc{LOOK} phase, the robot looks positions and colors of its neighboring robots. During the \textsc{COMPUTE} phase, the robot computes its next state, color, and movement according to the observation in the \textsc{LOOK} phase. The robot may change its state and color at the end of the \textsc{COMPUTE} phase. If the robot decides to move, it moves toward the target position during the \textsc{MOVE} phase. The robot may stop moving before arriving at the target position, however it precedes the distance of at least $\sigma$. If the distance from the current position to the target position is at most $\sigma$, the robot always reaches the target position. In the following, we simply describe that a robot executes \textsc{LOOK}, \textsc{COMPUTE}, and \textsc{MOVE} instead of executing the \textsc{LOOK}, \textsc{COMPUTE}, and \textsc{MOVE} phase, respectively.

In literature, some synchronization models are considered in the LCM model. In this paper, we focus on two types of synchronicity: the move-atomic model and the FSYNC (fully-synchronous) model. 
The move-atomic model guarantees that, while a robot executes \textsc{MOVE}, none of its neighbors executes \textsc{LOOK} or \textsc{COMPUTE}. The FSYNC model guarantees that all robots synchronously execute \textsc{LOOK}, \textsc{COMPUTE}, and \textsc{MOVE}.

\subsection{Self-stabilizing LCM implementations}

We aim to implement the move-atomic model and the FSYNC model on the underlying system model in a self-stabilizing manner. To do this, we assign some time units to execute \textsc{LOOK}, \textsc{COMPUTE}, and \textsc{MOVE}, and trigger the phases upon pulses. We assume that each phase does not last beyond the next pulse. This implies that the duration from a pulse to the next pulse is sufficiently long so that robots precede the distance of at least $\sigma$.
 
\begin{definition}
The system implements a self-stabilizing move-atomic model if there exists some time $t$ such that, after time $t$, (1) every robot repeats three-phase cycles infinitely and 
(2) while a robot executes \textsc{MOVE}, none of its neighbors executes \textsc{LOOK} or \textsc{COMPUTE}.
\end{definition}

\begin{definition}
The system implements a self-stabilizing FSYNC model if there exists some time $t$ such that 1) every robot repeats three-phase cycles infinitely after time $t$ and 2) the time period $[t,\infty]$ is divided into infinitely many \textsc{LOOK}, \textsc{COMPUTE}, and \textsc{MOVE} periods such that, in each \textsc{LOOK} (resp., \textsc{COMPUTE} and \textsc{MOVE}) period, every robot executes \textsc{LOOK} (resp., \textsc{COMPUTE} and \textsc{MOVE}) exactly once and does not execute any other phase.
\end{definition}

\section{Self-Stabilizing Move-Atomic Algorithm with Global Pulses}

In this section, we consider an implementation of self-stabilizing move-atomic model, where we assume there is an external clock for global pulses.

The main idea of the implementation is to apply the neighborhood mutual remainder algorithm in Section \ref{sec:nmr} to robots. We allow robot $r_i$ to execute LOOK and COMPUTE only when $r_i$ enters a critical section, and allow $r_i$ to execute MOVE only when the neighborhood of $r_i$ is in rendezvous, namely, $rendezvous_i$. When $rendezvous_i$ takes place, no neighbor of $r_i$ is in the critical section (\emph{i.e.}, no neighbor of $r_i$ executes LOOK or COMPUTE), and thus $r_i$ can execute MOVE. By this behavior, we can achieve the move-atomic property: While a robot executes MOVE, none of its neighbors executes LOOK or COMPUTE.

Let $k$ be the number of robots.
Each robot $r_i$ has following two lights:
\begin{itemize}
\item $\mathit{Nlight}_i\in\{1,\dots,k\}$: the color represents $|N[i]|$. 
\item $\mathit{Light}_i\in\{0,\dots,k\}$: the color represents the value of the local clock phase based on global pulses. 
\end{itemize}
Additionally, $r_i$ maintains the following variables:
\begin{itemize}
\item $\mathit{MaxN}_i\in\{1,\dots,k\}$: the maximum value of $\mathit{Nlight}$ among the closed neighborhood of $r_i$. 
\item $\mathit{LC}_i$: a Boolean which represents whether the next operation is \textsc{LOOK} or not. 
\item $\mathit{Clock}_i\in\{0,\dots,k\}$: 
a local counter of the global pulses, not necessarily identical value among the robots.
\end{itemize}

\begin{algorithm}[htb] 
\begin{tabbing}
10: \= sp \= sp \= sp \= sp \= \kill
\XLN
\LN \> {\bf Upon} a global pulse\\
\LN \> \> ${\it MaxN}_i:=\max\{{\it Nlight}_j~|~r_j\in N[i]\}$\\
\LN \> \> {\bf if} $\forall r_j \in N[i][{\it Light}_j \neq 0]\land {\it LC}_i = {\it false}$ {\bf then}$\{$\\
\LN \> \> \> // Rendezvous (No closed neighbors enter a critical section)\\
\LN \> \> \> execute {\sc MOVE}\\
\LN \> \> \>${\it Nlight}_i$:=$|N[i]|$\\
\LN \> \> \>${\it LC}_i:={\it true}$\\
\LN \> \> $\}${\bf else} {\bf if} ${\it Light}_i = 0 \land {\it LC}_i = {\it true}$ {\bf then}$\{$ \\
\LN \> \> \> // Enter a critical section\\
\LN \> \> \> execute {\sc LOOK}\\
\LN \> \> \> execute {\sc COMPUTE}\\
\LN \> \>  \> ${\it LC}_i:={\it false}$\\
\LN \> \>  $\}$\\
\LN \> \> $\mathit{Clock}_i := (\mathit{Clock}_i+1) \bmod (\mathit{MaxN}_i+1)$ \\
\LN \> \> ${\it Light}_i$ := ${\it Clock}_i$
\end{tabbing}
\caption{Self-Stabilizing Move-Atomic Algorithm with Global Pulses for $r_i$}\label{ccp}
\end{algorithm}

When $r_i$ detects a global pulse, $r_i$ obtains visible neighbors' $\mathit{Nlight}$ values and updates $\mathit{MaxN}_i$.
The local counter of global pulses $\mathit{Clock}_i$ is bounded by $\mathit{MaxN}_i$, and maintained by each robot $r_i$, that is, they are not necessarily the same.
By the value of its counter, each robot decides its color of $\mathit{Light}_i$. 
When $\mathit{Light}_i$ is $0$, $r_i$ can execute \textsc{LOOK} and \textsc{COMPUTE} (\emph{i.e.}, $r_i$ enters a critical section). 
Only immediately after all values of $\mathit{Light}$ of its closed neighbors become not $0$, meaning none are planning to execute \textsc{LOOK} and \textsc{COMPUTE} in the next (long) global pulse, it can execute \textsc{MOVE} (\emph{i.e.}, no closed neighbors enter a critical section and hence rendezvous is satisfied). 
Then, because the visible graph changes, $|N[i]|$ also changes.
Thus, after a \textsc{MOVE} execution, $r_i$ updates the color of $\mathit{Nlight}_i$.

Because the $\mathit{Light}_i$ value is $0,\ldots,\mathit{MaxN}_i$, even if the neighbors and itself have different light value, there is a time when no light is $0$ value among the neighbor and itself (See Algorithm~\ref{ccp}).  
The time means none are planning to execute \textsc{LOOK} in the next pulse.
In the next pulse, no one has $1$ clock value.
Thus, if it has not executed \textsc{MOVE} yet, it can execute \textsc{MOVE} of the original LCM algorithm when all are non zero.

\begin{figure}[h]
\begin{minipage}[t]{\hsize}
\centering\includegraphics[keepaspectratio, width=13cm]{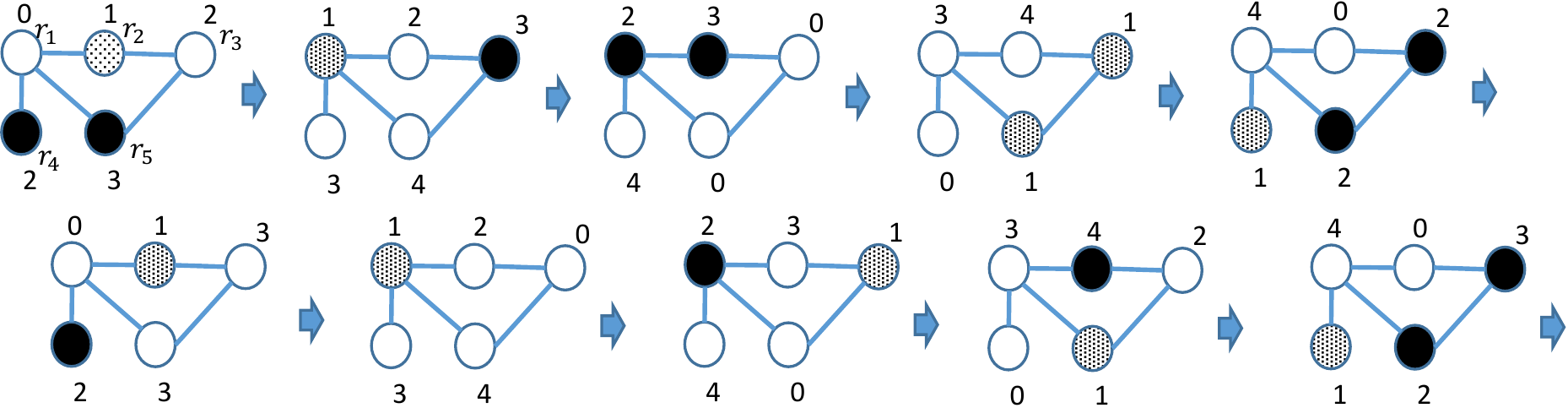}
\caption{There is a time when no light is $1$ value among its neighbor and itself.}\label{Figure1}
\end{minipage}\\
\begin{minipage}[t]{\hsize}
\begin{center}
\centering\includegraphics[keepaspectratio, width=5cm]{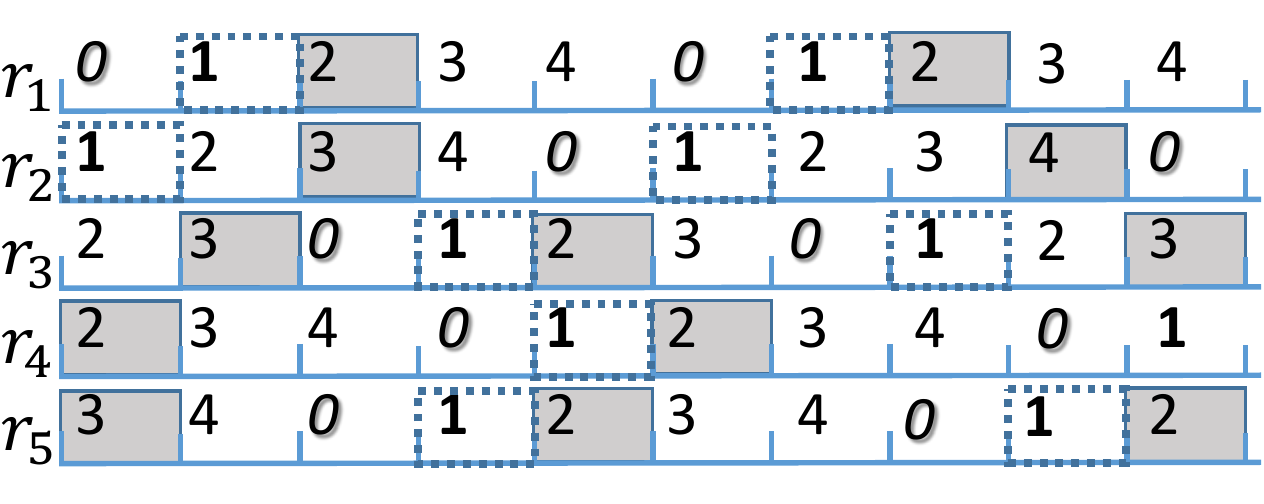}
\caption{Time Diagram for Self-Stabilizing Move-Atomic with a Global Pulses}
\label{tlccp}
\end{center}
\end{minipage}
\end{figure} 

Figures~\ref{Figure1} and \ref{tlccp} describe the same execution in two different ways. The clock value is the value in the end of the round.
Figure~\ref{Figure1} follows the policy of Algorithm 2 which allows a robot to execute \textsc{MOVE} if none of the neighbors have light $1$. 
Black (resp. Gray) nodes represent that the robot can \textsc{MOVE} (resp. \textsc{LOOK} and \textsc{COMPUTE}).
Figure~\ref{tlccp} takes the compliment policy allowing a robot to execute \textsc{LOOK} and \textsc{COMPUTE} when none of its neighbors execute \textsc{MOVE}. 
Any time slot in Figure~\ref{tlccp} with a solid (resp. dashed) box allows to execute \textsc{MOVE} (resp. \textsc{LOOK} and \textsc{COMPUTE}) by the algorithm.
The robots may use only one light, light 0, as the absent of a neighboring robot with light 0 indicates that no neighbor will be in 1 in the next configuration. 

\begin{lemma}\label{LA}
Eventually whenever a robot executes \textsc{MOVE} no other robot executes \textsc{LOOK} and \textsc{COMPUTE} and vise versa.
\end{lemma}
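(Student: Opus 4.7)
The plan is to exploit the fact that Algorithm~\ref{ccp} is a direct instantiation of the neighborhood mutual remainder paradigm of Section~\ref{sec:nmr}, with LOOK/COMPUTE in the role of the critical section (entered only when $\mathit{Light}_i = 0$ and $\mathit{LC}_i = \text{true}$) and MOVE confined to rendezvous instants (all $r_j \in N[i]$ satisfying $\mathit{Light}_j \neq 0$, plus $\mathit{LC}_i = \text{false}$). Thus the required exclusion between a MOVE of $r_i$ and a LOOK/COMPUTE of any $r_j \in N(i)$ is hard-coded into the guards on lines 3 and 7, and the proof reduces to a brief stabilization of the auxiliary variables followed by a single structural check.

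For the stabilization step, from any initial configuration the assignment $\mathit{Nlight}_i := |N[i]|$ inside the MOVE branch fixes $\mathit{Nlight}_i$ as soon as $r_i$ performs its first MOVE; hence within a few pulses $\mathit{MaxN}_i$ computed at line 2 equals $\max\{|N[j]| : r_j \in N[i]\}$. Because the topology can change only at MOVE events and $\mathit{Nlight}$ is refreshed in the same pulse, this remains correct thereafter. The flag $\mathit{LC}_i$ is toggled by whichever branch fires, so it becomes consistent after at most a single firing of each branch. This short transient is what absorbs the ``eventually'' qualifier of the lemma.

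The core argument is then a one-line structural step. Fix a pulse $t$ after this stabilization and suppose $r_i$ executes MOVE. By line 3, $\mathit{Light}_j \neq 0$ for every $r_j \in N[i]$; these values are the common snapshot written at the end of pulse $t{-}1$, and under the snapshot semantics $r_j$ at pulse $t$ sees its own $\mathit{Light}_j$ with exactly the value $r_i$ observed, so $r_j$ fails the LOOK/COMPUTE guard on line 7 which requires $\mathit{Light}_j = 0$. Hence no $r_j \in N(i)$ executes LOOK or COMPUTE at pulse $t$. The converse is symmetric: if $r_j$ executes LOOK/COMPUTE at $t$ then $\mathit{Light}_j = 0$ and, since $r_j \in N[i]$ for every $r_i \in N(j)$, the universal clause of the MOVE guard fails at each such $r_i$. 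The corner case $r_i = r_j$ is also covered, since $r_i \in N[i]$ makes the two guards logically incompatible for any single robot.

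The main obstacle I expect is being disciplined about pulse timing: one must keep the snapshot semantics clear, namely that at each pulse $t$ all robots read a common snapshot of the state written at pulse $t{-}1$ before any of them updates. This is exactly what prevents a race between $r_i$'s MOVE-guard evaluation and $r_j$'s LOOK/COMPUTE-guard evaluation; beyond this bookkeeping, the mutual exclusion follows immediately from the construction already justified for Algorithm~\ref{ssnmr}.
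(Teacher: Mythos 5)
Your proof is correct and follows essentially the same route as the paper's (one-line) argument: a robot's light is $0$ at the pulse where it executes \textsc{LOOK}/\textsc{COMPUTE}, and the \textsc{MOVE} guard of every robot in its closed neighborhood explicitly checks on the same snapshot that no light in $N[i]$ equals $0$, so the two guards are logically incompatible within a pulse. Your stabilization paragraph about $\mathit{Nlight}$, $\mathit{MaxN}$ and $\mathit{LC}$ is not actually needed for this safety property (those variables matter for the liveness claim of Lemma~\ref{LB}), but including it does no harm.
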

\begin{proof}
Each robot changes its light to 0 prior to executing \textsc{LOOK} (and \textsc{COMPUTE}) and no robot executes \textsc{MOVE} in the time period that follows another robot detects 0 value lights.
\end{proof}

\begin{lemma}\label{LB}
Each robot executes \textsc{LOOK}, \textsc{COMPUTE}, and \textsc{MOVE} infinitely often.
\end{lemma}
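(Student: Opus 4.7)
The plan is to combine two infinite-recurrence properties with the strict toggling of the auxiliary flag $\mathit{LC}_i$ between the two action branches. The recurrences are: (i) $\mathit{Light}_i = 0$ arises infinitely often, because $\mathit{Clock}_i$ is a bounded modular counter incremented at every pulse; and (ii) the rendezvous guard of line~3 fires infinitely often, by the same pigeonhole argument used in the proof of Theorem~1. Since the LOOK/COMPUTE branch flips $\mathit{LC}_i$ to $\mathit{false}$ and the MOVE branch flips it back to $\mathit{true}$, a situation in which one of the branches were permanently disabled would force the other one to repeatedly execute, which immediately re-enables the first; thus neither branch can be starved.

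First I would transfer the two bookkeeping facts from the proof of Theorem~1 to the present algorithm: that $\mathit{Nlight}_i$ eventually equals $|N[i]|$ for every robot, and that consequently $\mathit{MaxN}_j \geq |N[i]|$ for every $r_j \in N[i]$. Once this holds, the same counting argument applies: in every $\mathit{MaxN}_i + 1$ consecutive pulses, each neighbor $r_j \in N[i]$ has $\mathit{Light}_j = 0$ at most once, so by the pigeonhole principle there is a pulse in the window at which every $\mathit{Light}_j$ with $r_j \in N[i]$ is nonzero. That is exactly the rendezvous guard of line~3.

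Next I would chain the two recurrences with the toggling of $\mathit{LC}_i$. Starting from any pulse at which $\mathit{LC}_i = \mathit{true}$, within $\mathit{MaxN}_i + 1$ pulses the condition $\mathit{Light}_i = 0$ holds, and unless a MOVE has already fired in the meantime (which is equally fine), the LOOK/COMPUTE branch fires and flips $\mathit{LC}_i$ to $\mathit{false}$; symmetrically, from $\mathit{LC}_i = \mathit{false}$ the next rendezvous pulse fires MOVE and flips $\mathit{LC}_i$ back to $\mathit{true}$. Both waiting times are bounded by the previous paragraph, so the two branches alternate forever and every robot executes \textsc{LOOK}, \textsc{COMPUTE}, and \textsc{MOVE} infinitely often, invoking Lemma~\ref{LA} to conclude that the alternation is also safe.

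The step I expect to be the main obstacle is the bootstrapping of $\mathit{Nlight}$: because $\mathit{Nlight}_i$ is refreshed only inside the MOVE branch, while the rendezvous bound that fires MOVE in turn needs $\mathit{Nlight}$ to be correct, the circularity must be resolved by hand. I would argue that from an arbitrary initial configuration each robot either already sits inside a rendezvous window exposed by the initial clock offsets (triggering an immediate MOVE and refreshing $\mathit{Nlight}_i$) or is first driven to $\mathit{LC}_i = \mathit{false}$ by a LOOK/COMPUTE step at the next $\mathit{Light}_i = 0$, after which the clock offsets rotate through enough configurations in the following $\mathit{MaxN}_i + 1$ pulses to expose a rendezvous and fire the first MOVE. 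Once every robot has executed one MOVE, all $\mathit{Nlight}$ values are correct and the clean argument above takes over.
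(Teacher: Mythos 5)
Your core argument --- the modular clock makes $\mathit{Light}_i=0$ recur, a pigeonhole argument makes the rendezvous guard recur, and the toggling of $\mathit{LC}_i$ chains the two branches into an infinite alternation --- is exactly the paper's argument, spelled out in more detail. One small slip you share with the paper: the pigeonhole window should be $|N[i]|+1$ pulses, not $\mathit{MaxN}_i+1$. What you can guarantee for a neighbor $r_j\in N[i]$ is $\mathit{MaxN}_j\ge \mathit{Nlight}_i=|N[i]|$, so $r_j$ shows light $0$ at most once per $|N[i]|+1$ consecutive pulses; a neighbor's cycle may well be shorter than $\mathit{MaxN}_i+1$, in which case it can show $0$ twice inside your window. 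With the window $|N[i]|+1$ the count closes cleanly: $|N[i]|$ robots cannot cover $|N[i]|+1$ slots.

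The genuine gap is exactly where you predicted it: the bootstrapping of $\mathit{Nlight}$. Your proposed resolution --- that the initial clock offsets will ``rotate through enough configurations in the following $\mathit{MaxN}_i+1$ pulses to expose a rendezvous'' --- is false when the stale $\mathit{Nlight}$ values are too small. Concretely, take two adjacent robots $r_1,r_2$ (so $|N[1]|=|N[2]|=2$) with initial $\mathit{Nlight}_1=\mathit{Nlight}_2=1$ and opposite clock parities. Then $\mathit{MaxN}_1=\mathit{MaxN}_2=1$ forever, both clocks cycle modulo $2$, and at every pulse exactly one of the two lights equals $0$; the guard $\forall r_j\in N[i]\,[\mathit{Light}_j\neq 0]$ therefore never holds for either robot, neither ever executes \textsc{MOVE}, and neither ever refreshes its $\mathit{Nlight}$ (which is written only in the \textsc{MOVE} branch). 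The rendezvous is not merely delayed --- it never occurs, and after at most one \textsc{LOOK} each, both robots are stuck with $\mathit{LC}=\mathit{false}$ forever. To be fair, the paper's own proof has the same hole: it silently uses $\mathit{MaxN}_j\ge|N[i]|$, which requires $\mathit{Nlight}_i=|N[i]|$, which in turn requires $r_i$ to have already executed \textsc{MOVE}. Closing the circularity seems to require amending the algorithm (e.g., refreshing $\mathit{Nlight}_i$ at every pulse, as Algorithm~1 does with $N_i$) rather than a cleverer convergence argument, so your instinct that this is ``the main obstacle'' was right, but the hand-wave you offer in its place does not go through.
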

\begin{proof}
Consider an arbitrary robot $r_i$.
$r_i$ changes its light to 0 in every $\mathit{MaxN}_i+1$ successive rounds, and the neighbors of $r_i$ change their light to 0 at most once in every $\mathit{MaxN}_i+1$ successive rounds.
Since $|N[i]| \leq \mathit{MaxN}_i$, there must be a round in which no neighbor of $r_i$ changes its light to 0, then in the next round $r_i$ can execute \textsc{LOOK} and \textsc{COMPUTE}. 
In all rounds that follows $r_i$ changes its light to 0, $r_i$ can execute \textsc{MOVE}.
\end{proof}

By Lemmas~\ref{LA} and \ref{LB}, we derive the following theorem.

\begin{theorem}
Algorithm 2 implements the self-stabilizing move-atomic model. 
\end{theorem}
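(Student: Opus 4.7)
The plan is to derive the theorem by assembling the two preceding lemmas into the two clauses of the self\mbox{-}stabilizing move\mbox{-}atomic definition, with a short supplementary argument about the cyclic structure that Algorithm~\ref{ccp} enforces on each robot. Lemma~\ref{LA} yields clause (2) directly. Lemma~\ref{LB} yields that each of \textsc{LOOK}, \textsc{COMPUTE}, and \textsc{MOVE} is executed infinitely often at every robot, which is most of clause (1). What remains is to argue that these executions actually arrange themselves into a repeating \textsc{LOOK}\nobreakdash-\textsc{COMPUTE}\nobreakdash-\textsc{MOVE} cycle rather than appearing in some arbitrary interleaving.

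For the cycle structure I would inspect Algorithm~\ref{ccp} and point out two syntactic facts. First, in the branch guarded by ${\it Light}_i=0\land{\it LC}_i={\it true}$ the robot executes \textsc{LOOK} and then \textsc{COMPUTE} atomically within the same pulse, so \textsc{COMPUTE} always follows \textsc{LOOK} in the correct order. Second, the boolean ${\it LC}_i$ is toggled deterministically by these two branches: the \textsc{MOVE} branch sets ${\it LC}_i:={\it true}$, and the \textsc{LOOK}/\textsc{COMPUTE} branch sets ${\it LC}_i:={\it false}$. Consequently, once the robot has fired the \textsc{MOVE} branch at least once and the \textsc{LOOK}/\textsc{COMPUTE} branch at least once (both guaranteed by Lemma~\ref{LB}), the two branches necessarily alternate forever, producing the cycle $\dots,\textsc{LOOK}, \textsc{COMPUTE}, \textsc{MOVE}, \textsc{LOOK}, \textsc{COMPUTE}, \textsc{MOVE}, \dots$ required by the definition.

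For the self\mbox{-}stabilization aspect I would note, in parallel with the proof of the Section~\ref{sec:nmr} algorithm, that the auxiliary state converges in a bounded number of pulses from any initial configuration: ${\it Nlight}_i$ is overwritten by $|N[i]|$ in the first pulse following the first \textsc{MOVE}, ${\it MaxN}_i$ becomes correct one pulse later, ${\it Clock}_i$ is then bounded by ${\it MaxN}_i+1$ and advanced modulo that bound, and the pigeonhole argument of Lemma~\ref{LB} kicks in from that point on. Choosing $t$ to be any pulse after which ${\it MaxN}_i$ is stable for all $i$ and both guarded branches have fired at each robot at least once gives the time bound required by the definition, and both Lemma~\ref{LA} and Lemma~\ref{LB} apply unconditionally from that point.

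The main obstacle I anticipate is the cyclic\mbox{-}order step rather than the mutual\mbox{-}exclusion or liveness step: one has to rule out the degenerate case where, from a bad initial configuration, the ${\it LC}_i$ flag is stuck in a value that is temporarily incompatible with the current ${\it Clock}_i$ and neighbor lights, so that neither guard fires for a while. The clean way around this is to observe that ${\it Clock}_i$ keeps advancing and that $|N[i]|\le {\it MaxN}_i$ for every neighbor, so the pigeonhole argument used for Lemma~\ref{LB} forces the first firing of each branch within $O({\it MaxN}_i)$ pulses regardless of the initial ${\it LC}_i$; after that first pair of firings, the deterministic toggling of ${\it LC}_i$ locks the alternation in place, finishing the argument.
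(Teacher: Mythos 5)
Your proposal is correct and follows essentially the same route as the paper, whose entire proof of this theorem is the single sentence ``By Lemmas~\ref{LA} and \ref{LB}, we derive the following theorem.'' Your supplementary observations --- that the deterministic toggling of $\mathit{LC}_i$ forces the \textsc{MOVE} branch and the \textsc{LOOK}/\textsc{COMPUTE} branch to alternate (so the infinitely many executions guaranteed by Lemma~\ref{LB} really do form three-phase cycles), and that the auxiliary state $\mathit{Nlight}_i$, $\mathit{MaxN}_i$, $\mathit{Clock}_i$ converges from an arbitrary initial configuration --- fill in details the paper leaves implicit rather than departing from its approach.
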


\section{Self-Stabilizing Move-Atomic Algorithm with Local Pulses}

Now, we consider the case that there is no global pulse.
Then, each robot executes the algorithm based only on local pulses.

We assume that the duration of local pulses are the same for each robot.
Let $\mathit{Lclock}_i$ be a local pulse counter for a robot $r_i$.
However, they are not ticking together. 
When pulses are not synchronized they can be slightly less that one time unit apart, thus we triple the time reserved for \textsc{MOVE}, and \textsc{MOVE} is executed only in the middle time unit of these three. So if the light of a neighbor is 0, 1 or 2, we regard it as the original zero and \textsc{MOVE} is executed when our clock satisfies $(\mathit{Lclock}_i \bmod 3) = 1$. So if no neighbor clock is in $[0..2]$, we are ready to execute \textsc{MOVE} but wait until our clock is the next middle one.
To this end, in Algorithm 3, we triple the value of clock $\mathit{Light}_i$, that is, $\mathit{Light}_i$ := $(\lfloor\mathit{Lclock}_i/3\rfloor).(\mathit{Lclock}_i \bmod 3)$.

Each robot $r_i$ has following two lights:
\begin{itemize}
\item $\mathit{Nlight}_i\in\{1,\dots,k\}$: the color represents $|N[i]|$. 
\item $\mathit{Light}_i\in\{0,\dots,k\}$: the color represents the value of the local clock based on local pulses. 
\end{itemize}
Additionally, $r_i$ maintains the following variables:
\begin{itemize}
\item $\mathit{MaxN}_i\in\{1,\dots,k\}$: the maximum value of $\mathit{Nlight}$ among the closed neighborhood of $r_i$. 
\item $\mathit{LC}_i$: a Boolean which represents whether the next operation is \textsc{LOOK} or not.
\item $\mathit{Lclock}_i\in\{0,\dots,k\}$: 
a local counter of the local pulses, not necessarily identical value among the robots.
\end{itemize}

\begin{algorithm}[htb] 
\begin{tabbing}
 10: \= sp \= sp \= sp \= sp \= \kill
\XLN
\LN \> {\bf Upon} a local pulse\\
\LN \> \> ${\it MaxN}_i:=\max\{{\it Nlight}_j~|~r_j\in N[i]\}$\\
\LN \> \>{\bf if} $\forall r_j \in N[i][{\it Light}_j \not\in \{0.2,1.0,1.1\}] \land ({\it Lclock}_i\bmod 3) = 1 \land {\it LC}_i = {\it false} ~{\bf then}\{$\\
\LN \> \> \> execute {\sc MOVE}\\
\LN \> \> \> ${\it Nlight}_i$:=|N[i]|\\
\LN \> \> \> ${\it LC}_i$:= {\it true}\\
\LN \> \>$\}${\bf else} ${\it Light}_i = 1.0 \land {\it LC}_i = {\it true}$ ~{\bf then}$\{$\\
\LN \> \> \> execute {\sc LOOK}\\
\LN \> \> \> execute {\sc COMPUTE}\\
\LN \> \> \> ${\it LC}_i$:= {\it false}\\
\LN \> \> $\}$\\
\LN \> \>${\it Lclock}_i$:= (${\it Lclock}_i+1) \bmod (3{\it MaxN}_i+3)$ \\
\LN \> \>${\it Light}_i$ := $(\lfloor{\it Lclock}_i/3\rfloor).({\it Lclock}_i \bmod 3)$
\end{tabbing}
\caption{Self-Stabilizing Move-Atomic Algorithm with Local Pulses for $r_i$}\label{icp}
\end{algorithm}
\begin{figure}[t]
\begin{minipage}[t]{\hsize}
\centering\includegraphics[keepaspectratio, width=13cm]{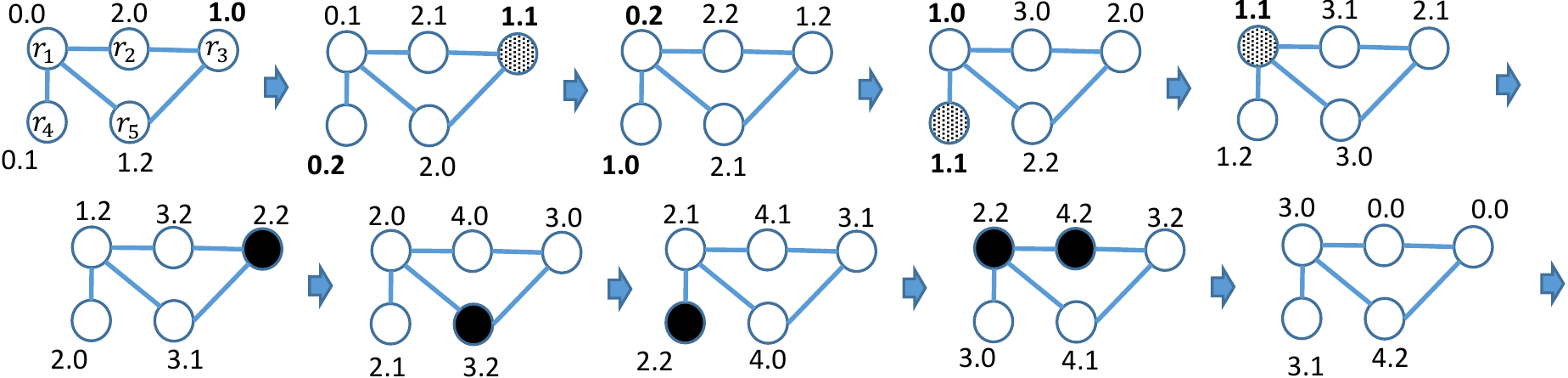}
\caption{There is a time when no light is $1$ value among its neighbor and itself.}\label{Figure2}
\end{minipage}\\
\begin{minipage}[t]{\hsize}
\centering\includegraphics[keepaspectratio, width=8cm]{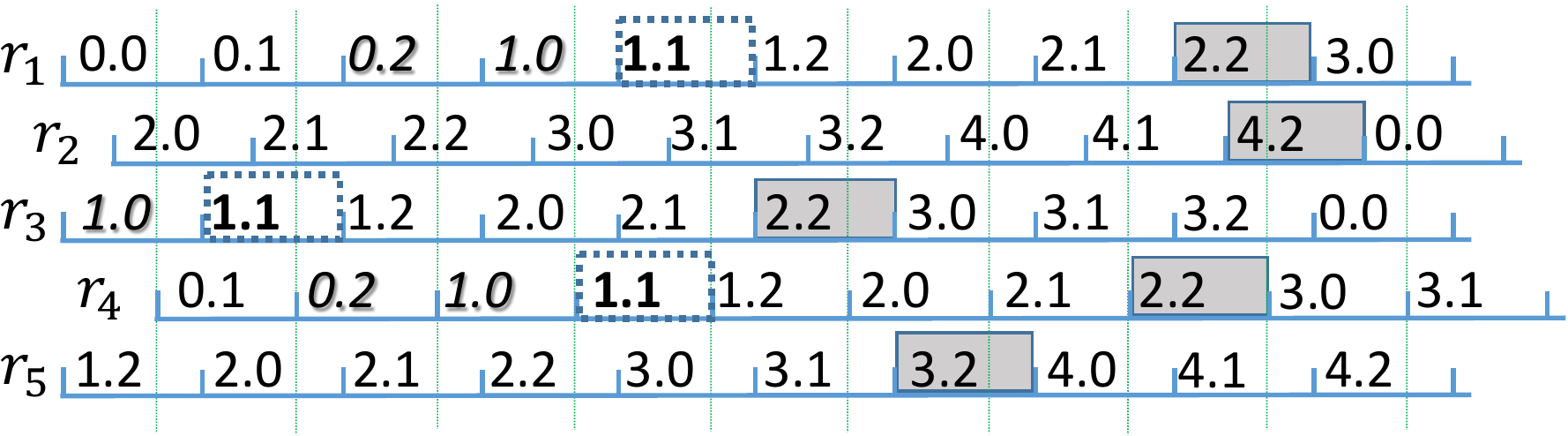}
\caption{Time Diagram for Self-Stabilizing Move-Atomic with Local Pulses}
\label{tlicp}
\end{minipage}
\end{figure} 

Figures \ref{Figure2} and \ref{tlicp} describe the same execution in two different ways.
The clock value is the value in the end of the round.
Figure \ref{tlicp} demonstrates the case that the timing of local pulses is off. 
In the round which $\mathit{Light}_i$ becomes 1.1, if $r_i$ observed one of 0.2, 1.0 and 1.1 lights, $r_i$ executes \textsc{LOOK} and \textsc{COMPUTE}.
If $r_i$ did not observe 0.2, 1.0 and 1.1 lights and $(\mathit{Lclock}_i \bmod 3)$ is 1, $r_i$ executes \textsc{MOVE}. Note that, then $(\mathit{Lclock}_i \bmod 3)$ becomes 2 in the round.

\begin{lemma}\label{LA1}
Eventually whenever a robot executes \textsc{MOVE}, no other robot in the closed neighborhood executes \textsc{LOOK} and \textsc{COMPUTE} and vice versa.
\end{lemma}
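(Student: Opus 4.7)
The plan is to mirror the proof of Lemma~\ref{LA} while accounting for the offsets between the robots' local pulses, exploiting the fact that the guard at line~3 of Algorithm~\ref{icp} excludes a three-element set $\{0.2,1.0,1.1\}$ of consecutive light values rather than a single one. The triple is engineered so that, regardless of how $r_j$'s local clock is shifted relative to $r_i$'s within the common period of one time unit, whenever a closed neighbor is currently or is about to execute \textsc{LOOK}/\textsc{COMPUTE}, $r_i$'s snapshot of $\mathit{Light}_j$ lies in that exclusion set. As a preliminary step I would argue a short convergence phase: after one full cycle the $\mathit{Nlight}_j$ fields carry the correct $|N[j]|$, $\mathit{MaxN}_i$ settles to $\max\{|N[j]|\mid r_j\in N[i]\}$, and the flag $\mathit{LC}_i$ enters its alternating pattern, so from some round on the guards behave predictably.

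For the forward direction, I would let $p$ be the pulse at which $r_i$ passes the \textsc{MOVE} guard and $q$ be the index of the most recent pulse of a neighbor $r_j$ with $t_j^q\le t_i^p$; since each robot's period is exactly one time unit, $t_j^q\in(t_i^p-1,t_i^p]$, so the only pulses of $r_j$ whose \textsc{LOOK}/\textsc{COMPUTE} interval could overlap $[t_i^p,t_i^{p+1}]$ are $q$ and $q+1$. A short case analysis then establishes: if $r_j$ is already executing \textsc{LOOK}/\textsc{COMPUTE} from pulse $q$ (so $\mathit{Lclock}_j=4$ right after pulse~$q$) then $r_i$ reads $\mathit{Light}_j=1.1$; if $r_j$ is to enter \textsc{LOOK}/\textsc{COMPUTE} at pulse $q+1$ (so $\mathit{Lclock}_j=3$ right after pulse~$q$) then $r_i$ reads $\mathit{Light}_j=1.0$; and in the simultaneous-pulse boundary case $t_j^q=t_i^p$ where $r_j$ is scheduled for \textsc{LOOK}/\textsc{COMPUTE} at $q+1$, $r_i$'s snapshot delivers the pre-update value based on $\mathit{Lclock}_j=2$, namely $\mathit{Light}_j=0.2$. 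In each case the guard at line~3 fails, contradicting that $r_i$ selected \textsc{MOVE}. The ``vice versa'' half of the lemma is obtained by applying the same case analysis with the roles of $r_i$ and $r_j$ swapped.

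The part I expect to be most delicate is the simultaneous-pulse boundary $t_j^q=t_i^p$: with a strict ordering of pulses the pair $\{1.0,1.1\}$ alone would suffice, but when two neighbors tick at exactly the same instant the snapshot delivers the pre-update light of the neighbor, whose clock has advanced one step less than one would naively expect --- this is precisely why the third excluded value $0.2$ is present, and it must be handled explicitly to keep the case analysis airtight.
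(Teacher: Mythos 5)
Your proposal is correct and follows essentially the same argument as the paper's proof: both rest on the observation that the three consecutive light values $0.2$, $1.0$, $1.1$ bracket a \textsc{LOOK}/\textsc{COMPUTE} execution, so any neighbor whose pulse falls within one time unit of it must observe one of them and fail the guard at line~3 (the paper phrases this from the perspective of the robot executing \textsc{LOOK}, you from the robot executing \textsc{MOVE}, which are contrapositive views of the same claim). Your explicit case analysis on the neighbor's pulse index and the boundary case explaining why $0.2$ is needed is in fact more careful than the paper's brief argument.
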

\begin{proof}
Assume towards contradiction that $r_i$ executes \textsc{LOOK} while $r_j\in N[i]$ executes \textsc{MOVE}. Before $r_i$ executes \textsc{LOOK} the light of $r_i$ is 0.2 and 1.0 for two subsequent time units, $r_j$ must find this value and hence does not execute \textsc{MOVE} when $r_i$ executes \textsc{LOOK}. 
Just after \textsc{LOOK} operations of $r_i$, if $r_i$'s right is changed to 1.1,
$r_j$ does not execute \textsc{MOVE} when $r_i$ executes \textsc{LOOK}.
\end{proof}

\begin{lemma}\label{LB1}
Each robot executes \textsc{LOOK}, \textsc{COMPUTE} and \textsc{MOVE} infinitely often.
\end{lemma}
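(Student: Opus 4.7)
The plan is to mirror the argument of Lemma~\ref{LB} for the global-pulse case, adapting it to the tripled clock and the additional flag $\mathit{LC}_i$ that enforces alternation between \textsc{MOVE} and the \textsc{LOOK}/\textsc{COMPUTE} pair. Fix an arbitrary robot $r_i$. I will show: (a) $r_i$ has infinitely many local pulses at which $\mathit{Light}_i = 1.0$; (b) $r_i$ has infinitely many local pulses at which the guard of the \textsc{MOVE} branch, other than the conjunct $\mathit{LC}_i = \mathit{false}$, is satisfied, call these \emph{rendezvous} pulses; and (c) the toggling of $\mathit{LC}_i$ converts infinitely many of each type into actual executions of the corresponding phases.

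For (a), $\mathit{Lclock}_i$ is incremented by $1$ modulo $3\mathit{MaxN}_i + 3$ at every local pulse, so $\mathit{Lclock}_i = 3$ (equivalently $\mathit{Light}_i = 1.0$) occurs at least once in every $3\mathit{MaxN}_i + 3$ consecutive local pulses. For (b), I reuse the pigeonhole idea of Lemma~\ref{LB}. Because all robots tick at the same rate, at every pulse where $\mathit{Lclock}_i \bmod 3 = 1$ (a \emph{candidate}) each neighbor $r_j$'s value of $\mathit{Lclock}_j$ lies in a fixed residue class modulo $3$. The blocking light set $\{0.2, 1.0, 1.1\}$ corresponds to $\mathit{Lclock}_j \in \{2,3,4\}$, which contains exactly one representative of each residue class modulo $3$. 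Hence in any $\mathit{MaxN}_j + 1$ consecutive candidates of $r_i$, $r_j$'s light falls in the blocking set at most once. Since $r_i \in N[j]$ implies $\mathit{MaxN}_j \geq \mathit{Nlight}_i = |N[i]|$, in any $|N[i]| + 1$ consecutive candidates each $r_j \in N[i]$ (including $r_i$ itself, which contributes one blocking candidate via $\mathit{Light}_i = 1.1$) blocks at most one candidate. Pigeonhole then gives at least one rendezvous pulse per $|N[i]|+1$ consecutive candidates, hence at least one in every $3(|N[i]|+1)$ consecutive local pulses.

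For (c), the algorithm forces $\mathit{LC}_i$ to alternate: \textsc{MOVE} sets $\mathit{LC}_i := \mathit{true}$ and only fires when $\mathit{LC}_i = \mathit{false}$, whereas \textsc{LOOK}/\textsc{COMPUTE} sets $\mathit{LC}_i := \mathit{false}$ and only fires when $\mathit{LC}_i = \mathit{true}$. If $\mathit{LC}_i = \mathit{true}$, then by (a) a pulse with $\mathit{Light}_i = 1.0$ arrives within finitely many pulses and triggers \textsc{LOOK} and \textsc{COMPUTE}, toggling $\mathit{LC}_i$ to $\mathit{false}$; symmetrically, if $\mathit{LC}_i = \mathit{false}$, then by (b) a rendezvous pulse arrives and triggers \textsc{MOVE}, toggling $\mathit{LC}_i$ back to $\mathit{true}$. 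Therefore \textsc{LOOK}, \textsc{COMPUTE}, and \textsc{MOVE} are each executed infinitely often.

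The step I expect to be the main obstacle is making the inequality $\mathit{MaxN}_j \geq |N[i]|$ used in (b) rigorous under self-stabilization, because $\mathit{Nlight}_i$ is refreshed only inside the \textsc{MOVE} branch and before $r_i$'s first \textsc{MOVE} it may hold an arbitrary initial color. One must argue that a single \textsc{MOVE} can nonetheless be forced from an arbitrary configuration (inflated initial $\mathit{Nlight}$ values only lengthen neighbors' cycles and preserve the pigeonhole bound, while deflated ones eventually get corrected after the first \textsc{MOVE} by any robot in $N[i]$), and that the dynamic graph reduction of Section~3 prevents $|N[i]|$ from changing between a \textsc{LOOK}/\textsc{COMPUTE} and its matching \textsc{MOVE}, so the cached $\mathit{Nlight}$ remains a valid upper bound on the current degree throughout the pigeonhole window.
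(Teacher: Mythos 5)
Your proposal is correct and follows essentially the same route as the paper's proof: a pigeonhole argument showing that within every window of $3\mathit{MaxN}_i+3$ local pulses some candidate pulse has no closed neighbor showing a blocking light, combined with the observation that a \textsc{LOOK}/\textsc{COMPUTE} opportunity recurs once per clock cycle. Your write-up is in fact more careful than the paper's (the residue-class-mod-$3$ accounting, the explicit $\mathit{LC}_i$ alternation, and the flagged issue of uncorrected initial $\mathit{Nlight}$ values are all glossed over in the published proof), but the underlying argument is the same.
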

\begin{proof}
Consider an arbitrary robot $r_i$.
$r_i$ changes its light value to 0.2, 1.0 or 1.1 in every $3\mathit{MaxN}_i+3$ successive rounds.
The neighbors of $r_i$ change their light at most once in every $3\mathit{MaxN}_i+3$ successive rounds.
Since $\mathit{MaxN}_i \geq |N[i]|$, there must be a round in which no neighbor of $r_i$ changes its light to 0.2, 1.0 or 1.1.
Then, in the round in which $r_i$'s light becomes 1.1, $r_i$ can execute \textsc{LOOK}. 
In all rounds that follows $r_i$ changes its light to the value such that $(\mathit{Lclock}_i \bmod 3)=1$ holds, $r_i$ can execute \textsc{MOVE}.
\end{proof}

By Lemmas~\ref{LA1} and \ref{LB1}, we derive the following theorem.

\begin{theorem}
Algorithm 3 implements the self-stabilizing move-atomic model. 
\end{theorem}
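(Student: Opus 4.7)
The plan is to derive the theorem as a direct consequence of Lemmas~\ref{LA1} and~\ref{LB1}, after first verifying that the algorithm is indeed self-stabilizing, i.e., that from any initial configuration (arbitrary corrupted values of $\mathit{Nlight}_i$, $\mathit{Light}_i$, $\mathit{MaxN}_i$, $\mathit{LC}_i$, and $\mathit{Lclock}_i$) the system eventually reaches a behavior to which the two lemmas apply. Recall that the self-stabilizing move-atomic model requires two things after some stabilization time $t$: (1) every robot repeats the three-phase cycle infinitely often, and (2) whenever a robot executes \textsc{MOVE}, none of its neighbors executes \textsc{LOOK} or \textsc{COMPUTE}.

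First, I would establish a short convergence argument. Since $\mathit{MaxN}_i$ is recomputed from scratch at every pulse (line~2), it reflects the current $\mathit{Nlight}$ values of the closed neighborhood in every round. The $\mathit{Nlight}_j$ values get refreshed to the true $|N[j]|$ whenever $r_j$ executes \textsc{MOVE}; and since $\mathit{Lclock}_i$ is incremented modulo $3\mathit{MaxN}_i+3$ on every pulse, within $O(k)$ pulses each robot's clock cycles through all residues, so the guards eventually evaluate truthfully, $\mathit{LC}_i$ toggles into its intended pattern, and each $\mathit{Nlight}_i$ becomes correct. After this transient, Lemmas~\ref{LA1} and~\ref{LB1} apply.

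Then the theorem is essentially a packaging step. Lemma~\ref{LB1} delivers condition~(1) of the definition: each robot executes \textsc{LOOK}, \textsc{COMPUTE}, and \textsc{MOVE} infinitely often, so the three-phase cycle is repeated infinitely; note that the $\mathit{LC}_i$ flag enforces the strict alternation \textsc{MOVE}-then-\textsc{LOOK/COMPUTE}-then-\textsc{MOVE}, so the executions are organized as proper cycles and not arbitrarily interleaved. Lemma~\ref{LA1} delivers condition~(2) directly: whenever $r_i$ executes \textsc{MOVE}, no neighbor $r_j \in N[i]$ is simultaneously executing \textsc{LOOK} or \textsc{COMPUTE}.

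The main obstacle, such as it is, is not the combination of the lemmas but the self-stabilization check for the asynchronous (local-pulse) setting. In particular, one must be careful that the ``triple-width'' construction ($\mathit{Light}_i = (\lfloor\mathit{Lclock}_i/3\rfloor).(\mathit{Lclock}_i \bmod 3)$, with the MOVE-eligibility set $\{0.2,1.0,1.1\}$) survives arbitrary initial phase offsets of up to one time unit between neighbors' local pulses. I would verify this by noting that a \textsc{MOVE} of $r_i$ occurs only when $(\mathit{Lclock}_i \bmod 3) = 1$, i.e., in the middle third of the three-unit slot, so that even if $r_j$'s pulse lags $r_i$'s by less than one time unit, $r_j$ still observes $r_i$'s light inside $\{0.2,1.0,1.1\}$ during its own guard evaluation. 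Once this robustness is argued, Lemmas~\ref{LA1} and~\ref{LB1} complete the proof.
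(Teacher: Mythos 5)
Your proposal is correct and matches the paper's own argument, which derives the theorem directly by combining Lemma~\ref{LA1} (no neighbor executes \textsc{LOOK}/\textsc{COMPUTE} during a \textsc{MOVE}) with Lemma~\ref{LB1} (each robot executes all three phases infinitely often); the extra convergence and phase-offset checks you add are sensible elaborations of details the extended abstract leaves implicit.
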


\section{Self-Stabilizing FSYNC Algorithm with Local Pulses}

We propose a novel local pulses version Algorithm~\ref{glcm} of the min clock selection of \cite{ADG91}.
Let $D$ be a diameter of the communication graph. Here we assume the \emph{regular register} semantics (see e.g., \cite{L96,D2K,AW04}) where 
a read segment that overlaps light changes may return one of the overlapping lights (rather than all as we assumed previously).
Thus, when a neighbor changes a light from 0 to 1 while a robot reads it can observe either 0 or 1. In case all participants are connected in a line, the above semantic can cause a difference of 1 clock value between any two neighbors, and $D=n$ from side to side.  
\begin{algorithm}[tbh]
\begin{tabbing}
10: \= sp \= sp \= sp \= sp \= \kill
\XLN
\LN \> \textbf{Upon} a local pulse\\
\LN \> \>$\mathit{Light}_i:= (min\{\mathit{Light}_j ~|~ r_j \in N[i]\}+1) \bmod (6D+1)$\\
\LN \> \> \textbf{if} $\mathit{Light}_i = 2D$ \textbf{then} $\{$\\
\LN \> \> \> execute \textsc{LOOK}\\
\LN \> \> \> execute \textsc{COMPUTE}\\
\LN \> \> $\}$ \textbf{else if} $\mathit{Light}_i = 4D$ \textbf{then} $\{$\\
\LN \> \> \> execute \textsc{MOVE}\\
\LN \> \> $\}$
\end{tabbing}
\vspace{-0.3cm}
\caption{Self-Stabilizing FSYNC Algorithm for $r_i$}\label{glcm}
\end{algorithm}

For ease of description we use $6D+1$ light values, in fact the number of lights may be close to $2D+1$ as in the case of a global clock pulse suggested in \cite{ADG91}.
The fact that pulses are according to local clocks require a change of the stabilization proof of \cite{ADG91}.
Still it holds that during $6D+1$ local pulses of all participants, at least one participant, say $r_i$, assigns 0 to $\mathit{Light}_i$,
and in the next $D$ local pulses of every participants, all the lights will be less than $D+1$.
Moreover in the next $D$ local pulses of all robots, the lights will be at most one value apart forever (where 0 follows $6D$). 
Thus all can execute \textsc{LOOK} and \textsc{COMPUTE} when their light is $2D$ and \textsc{MOVE} when their light is $4D$.
We assume that each robot reads all lights during the time unit (See. Figures \ref{Figure8} and \ref{Figure9}).

\begin{figure}[t]
\begin{minipage}[t]{\hsize}
\centering\includegraphics[keepaspectratio, width=13cm]{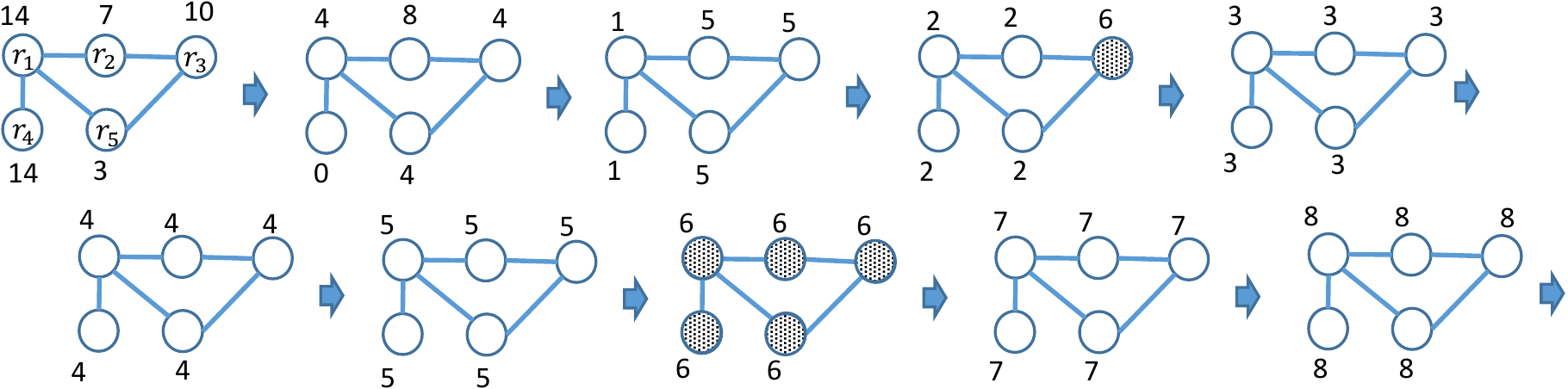}
\caption{Each robot reads all lights during the time unit.}\label{Figure8}
\end{minipage}\\ 
\begin{minipage}[t]{\hsize}
\begin{center}
\centering\includegraphics[keepaspectratio, width=8.5cm]{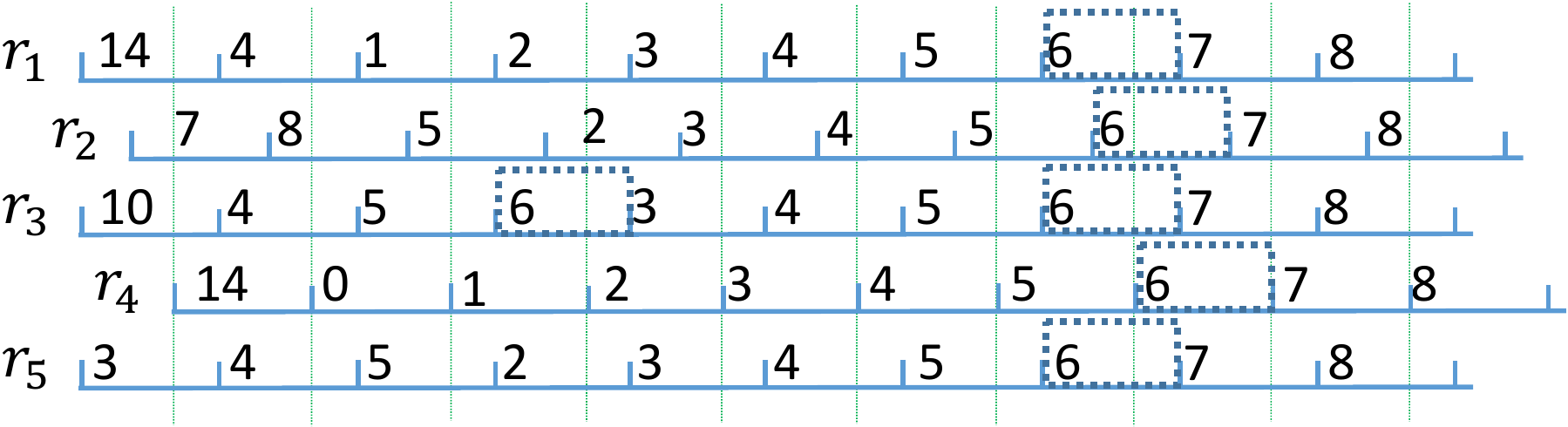}
\caption{Time Diagram for Self-Stabilizing FSYNC Algorithm, when each robot reads all lights during the time unit.}
\label{Figure9}
\end{center}
\end{minipage}
\end{figure} 

The case in which each robot reads one light at a time is straight-forward requiring a larger, by a factor related to the number of possible neighbors, clock value bound.    

\section{Conclusions}
Implementing algorithms in practice requires to rely on the abstracted away details that the algorithm designers make. We present practical implementation of LCM encountering new synchronization distributed computing task, i.e., the neighborhood mutual remainder.

Our results are described for the case of non-drifting local clocks,
where a time unit is identical across all clocks. Still, local clock speeds may slightly vary due to a constant bound on mutual clock drifts. 
Ensuring overlaps in (consecutive logical) time units in a fashion similar to the one suggested in the last two sections will cope with such clock drifts as well.

\subsubsection*{Acknowledgement}
This research was supported by Japan Science and Technology Agency (JST) as part of Strategic International Collaborative Research Program (SICORP).

\newpage

\end{document}